\documentclass[11pt,letterpaper]{article}

\topmargin=-0.35in \topskip=0pt \headsep=15pt \oddsidemargin=0pt
\textheight=9in \textwidth=6.52in \voffset=0in

\usepackage{amsmath,amsfonts,graphpap,amscd,mathrsfs,graphicx,lscape}
\usepackage{epsfig,amssymb,amstext,xspace}
\usepackage{theorem}
\usepackage{algorithm,algorithmic}
\usepackage{paralist}
\usepackage{color}              
\usepackage{epsfig,subfigure}
\usepackage{bbm}

\newcommand{\E}{\mathbb{E}}

\newcommand{\rev}{\mathcal{R}}

\newtheorem{theorem}{Theorem}[section]
\newtheorem{lemma}[theorem]{Lemma}

\newtheorem{corollary}[theorem]{Corollary}

\def\squareforqed{\hbox{\rule{2.5mm}{2.5mm}}}

\def\QED{\ifmmode\squareforqed 
  \else{\nobreak\hfil   
    \penalty50                 
    \hskip1em                  
    \null                      
    \nobreak                   
    \hfil                      
    \squareforqed              
    \parfillskip=0pt           
    \finalhyphendemerits=0     
    \endgraf}                  
  \fi}

\def\blksquare{\rule{2mm}{2mm}}
\def\qedsymbol{\blksquare}
\newcommand{\bg}[1]{\medskip\noindent{\bf #1}}
\newcommand{\ed}{{\hfill\qedsymbol}\medskip}

\newenvironment{proof}{\bg{Proof : }}{\ed}


\newcommand{\R}{\ensuremath{\mathbb R}}

\newcommand{\I}{\ensuremath{\mathcal I}}

\newcommand{\M}{\ensuremath{\mathcal M}}
\newcommand{\B}{\ensuremath{\mathcal B}}

\newcommand{\comment}[1]{}
 {}


\newcommand{\junk}[1]{}




\newlength{\tmp} \newlength{\lpsx} \newlength{\lpsy} \newlength{\upsx} \newlength{\upsy}

\newcommand{\spe}{\text{\textsc{Spe}} }

\newcommand{\opt}{\text{\textsc{Opt}} }
\newcommand{\vcg}{\text{\textsc{Vcg}} }

\begin{document}

\title{Bayesian Sequential Auctions\renewcommand{\thefootnote}{\fnsymbol{footnote}}\footnote{Preliminary version appeared in EC'12. Current manuscript corrects an
error in Theorem \ref{thm:matching} (leading to a bound of $3.16$ instead of $3$) and improves the 
bound in Theorem \ref{thm:matroid} from $3$ to $2.58$.}\renewcommand{\thefootnote}{\arabic{footnote}}\setcounter{footnote}{0}}

\author{Vasilis Syrgkanis\thanks{ {\tt vasilis@cs.cornell.edu} Dept of Computer
Science, Cornell University. Supported in part by ONR grant N00014-98-1-0589 and
NSF grants CCF-0729006.}
\and
\'{E}va Tardos\thanks{ {\tt eva@cs.cornell.edu} Dept of Computer
Science, Cornell University. Supported in part by NSF grants CCF-0910940 and
CCF-0729006, ONR grant N00014-08-1-0031,  a Yahoo!~Research Alliance Grant, and
a Google Research Grant.}}

\maketitle

\begin{abstract}
In many natural settings agents participate in multiple different auctions that are not simultaneous. In such auctions, future opportunities affect strategic considerations of the players. The goal of this paper is to develop a quantitative understanding of outcomes of such sequential auctions. In earlier work (Paes Leme et al. 2012) we initiated the study of the price of anarchy in sequential auctions. We considered sequential first price auctions in the full information model, where players are aware of all future opportunities, as well as the valuation of all players. In this paper, we study efficiency in sequential auctions in the Bayesian environment, relaxing the informational assumption on the players. We focus on two environments, both studied in the full information model in Paes Leme et al. 2012, matching markets and matroid auctions. In the full information environment, a sequential first price cut auction for matroid settings is efficient. In Bayesian environments this is no longer the case, as we show using a simple example with three players. Our main result is a bound of $1+\frac{e}{e-1}\approx 2.58$ on the price of anarchy in both matroid auctions and single-value matching markets (even with correlated types) and a bound of $2\frac{e}{e-1}\approx 3.16$ for
general matching markets with independent types. To bound the price of anarchy we need to consider possible deviations at an equilibrium. In a sequential Bayesian environment the effect of deviations is more complex than in one-shot games; early bids allow others to infer information about the player's value.  We create effective deviations despite the presence of this difficulty by introducing a bluffing technique of independent interest.
\end{abstract}

\section{Introduction}
Auctions typically used in practice are extremely simple, not truthful, and do not even run simultaneously. The Web provides an environment where running auctions becomes easy, an environment where simplicity of design and decreased need of coordination is more important than ever before.

The most well-known auction design is the truthful and efficient VCG auction, which
requires large amount of coordination among sellers, having items available simultaneously, and
requires the sellers to agree on how to divide the revenue. The classical field of mechanism design
has a long and distinguished history. However, the mechanisms proposed are typically too complex, and
require careful coordination between the sellers. Mechanism design for either multidimensional valuations or
players with correlated types is proving to be rather challenging and leads to quite complicated mechanisms.
In contrast, mechanisms typically used in practice tend to be simple,
and do not satisfy the high standards of classical mechanism design even in simple environments. Bidders appear to
prefer simple mechanisms. Given the prevalent use of simple mechanisms,
it is important to design simple auctions with good
performance and to understand properties of auction designs used
in practice.

Several recent papers have studied properties of simple item-bidding auctions,
such as using simultaneous second price auctions for each item
\cite{Christodoulou2008,Bhawalkar}, or simultaneous first price auctions
\cite{Hassidim2011,Immorlica2005} and show that equilibria of these games have high social welfare.
Other recent papers study simple auctions used in practice and show that high social welfare can be achieved
even when players' valuations come from arbitrarily correlated distributions \cite{Lucier2011,DBLP:journals/corr/abs-1201-6429}.

In this paper we will focus on sequential auctions. The simplest, most natural and most common way to
auction items by different owners is to run individual single
item auctions (e.g. sell each item separately on eBay).
In \cite{PaesLeme2012} we initiated the study of the quality of outcomes in
sequential auctions in the full information setting. Full information is an extremely strong
assumption: buyers have to be aware of all future opportunities,
and the valuations of all bidders participating in future auctions. In contrast, in online auctions
(see \cite{Parkes} for a survey) it is assumed that players make strategic decisions without having any
information about the future. However, typical participants in auctions  have some information
about future events, and engage in strategic thinking about the upcoming
auctions. The Bayesian environment considered in this paper allows us to model
agents who are fully strategic, and have partial information about upcoming auctions. We focus on first price
auctions as the price of anarchy for second price auctions can be arbitrarily bad in most environments
already in the full information setting, including matching markets or even additive valuations.

The goal of this paper is to study sequential first price auctions in the Bayesian setting, relaxing the full information
assumption of \cite{PaesLeme2012}. Many of the results for
simultaneous auctions, such as \cite{Christodoulou2008,Bhawalkar}, extend naturally, and sometimes without
degradation of quality, to the Bayesian setting. However, such extension is more difficult in the sequential setting.
To bound the price of anarchy we need to consider possible deviations at an equilibrium. The sequential and Bayesian environment presents additional difficulties that we need to overcome. Early bids allow others to infer information about the player's value, causing the player behavior in later auctions to become correlated even when valuations are drawn independently, and correlated bidding makes it harder to prove price of anarchy bounds
- indicative of this is that Bayes-Nash price of anarchy proofs for simultaneous item bidding (e.g.\cite{Bhawalkar,Hassidim2011}) would fail in the
presence of correlated bidding.

Another difficulty that is more technical is that price of anarchy proofs typically consider possible deviations that the players have. However, in a sequential setting it is unclear how to predict player behavior outside the equilibrium path. To deal with this issue in \cite{PaesLeme2012} we consider deviations where players remain on the equilibrium path until a particular item of interest arrives, the item assigned to them in the efficient allocation. In the Bayesian setting, it is harder to identify what can be such an item of interest, as the efficient solution varies with the valuations of other players. We use a random draw of the valuation to select the item in question, and introduce a bluffing technique to de-correlate the behavior of players and relate the equilibrium outcome to the efficient solution in expectation.

We model strategic thinking  in a sequential Bayesian environment by using the notion of Perfect Bayesian Equilibrium. Perfect Bayesian Equilibrium models players updating their beliefs about the values of other participants throughout the game based on the information inferred from previous auctions. This update of beliefs is greatly affected by the information structure of the game. It depends on what information is available to participants after each round of the auction. Updates are needed even if players only observe their own outcomes. The main technical problem with Bayesian sequential games is that the
Bayesian update of beliefs is well defined only when players observe actions that are employed by some players type at equilibrium. Otherwise, the probability of observing such an action is zero and an update is not well defined. A Perfect Bayesian Equilibrium assumes that for such out of equilibrium histories of play, players can form any possible belief that is consistent with the history. A more restrictive notion is that of the Sequential Equilibrium, where each player uses a perturbed strategy, bidding uniformly at random with a small probability $\epsilon$. Such perturbation makes all bids possible on the equilibrium path, and hence an equilibrium has well-defined beliefs for each possible history of play. A sequential equilibrium is the limit of such $\epsilon$-perturbed equilibria as $\epsilon$ goes to 0. Our example of a matching auction shows that the outcome is inefficient even at a sequential equilibrium, but our bounds on the price of anarchy make no assumptions on the players' beliefs outside the equilibrium path. Hence, our upper bounds also hold for the concept of Perfect Bayesian Equilibrium.

The updates of beliefs depend on what information is available to the players after each round. Updates are needed even if players only observe their own outcomes, but releasing more information to the players allows for more informative updates. In our result on matroid auctions we assume that the winner (and possibly the winning price) is public information after each round. In particular, players deviations from equilibrium are not observed by other players, as long as they do not effect the outcome. Our result on matching auctions does not require any assumption on the information structure.

To illustrate the issues discussed above we start in Section \ref{sec:prel} by giving an example of a two-round sequential first price auction in the Bayesian setting. The example has two items on sale, and three bidders. Bidders 1 and 2 are interested in acquiring either of the two items and have equal value for them, while the 3rd bidder wants only the second item. In the full information setting, the only subgame perfect equilibrium of this game is to allocate the two items to the two bidders of larger value. We consider this game with valuations drawn from $[0,1]$ uniformly at random. Note that in the 2nd auction the third bidder has an information advantage. In this auction two bidders compete for the second item, the newly arrived bidder, and the bidder who lost the first auction. The advantage comes from the fact that losing the first auction contains information about the valuation of this bidder. This information asymmetry, and the fact that first price auction is not truthful, naturally leads to inefficiency. While this example is stated as a matching market, the same example with 3 bidders can also be thought of as matroid auction, as we will show in Section \ref{sec:matroid}.

\subsection{Our Results }
In section \ref{sec:matching} we consider matching markets, where player $i$ has valuation $v_{ij}$ for item $j$, and is interested in acquiring only one item. We will assume that there is free disposal, so his value for a set of items $J$ is $\max_{j\in J} v_{ij}$. We assume that the valuations of  different players are independent, and the valuation of player $i$ comes from a distribution $\mathcal{F}_i$. Note that the valuation of different items for the same player can be arbitrarily correlated, for example, players can have the same value for each item of interest. We prove a bound of $2\frac{e}{e-1}\approx 3.16$ on the price of anarchy of this game. This is an improvement even in the full information case, where in \cite{PaesLeme2012} we showed a bound of 4 for the price of anarchy of mixed Nash equilibria for the full information version. However, it is still a bit larger than the bound of 2 for the case of pure equilibria in the full information case. It is an interesting open problem if this increase is necessary. Our proof for pure equilibria in the full information setting relies on evaluating deviating bids for each player on the item they are assigned in the efficient outcome. This simple deviation strategy doesn't work in the Bayesian setting, as the efficient outcome depends on all the players' valuations. Our technique of selecting a good deviation to consider leads to a small increase in the bound. The result also extends to versions of the problem where items are not auctioned one-by-one, but rather groups of items are auctioned simultaneously in each sequential step.

In section \ref{sec:matroid} we consider a sequential matroid auction. The matroid structure limits the possible set of people that can get served to be a basis of that matroid. We assume that the players' valuations are drawn from a joint distribution $\mathcal{F}$, that can be arbitrarily correlated. In each step of the auction, we run a first price auction for selecting an element of a cut that doesn't contain a previous winner. This auction structure corresponds to the standard greedy algorithm for selecting the basis of maximum value. In  \cite{PaesLeme2012} we showed that in the full information setting, this auction implements the $\vcg$ outcome (achieves the same allocation and prices). In contrast, as the example mentioned above shows, the Bayesian information structure leads to inefficiency. We show that the resulting inefficiency is not unlimited by giving a bound of $1+\frac{e}{e-1}$ on the price of anarchy. An interesting example of the matroid auction is matchable subsets of a bipartite graph (a set of nodes in one side is independent, if there is a way to match them to the other side via a matching in the graph). We can think of this as a special case of the matching auction, when players have a single private value $v_i$ but each can only be assigned to a subset of items $S_i$. For such a matroid setting a more natural auction is to run the sequential item auction that we studied in the matching markets case, instead of the sequential cut auction. We show that the sequential item auction for this special case of matching markets has price of anarchy at most $1+\frac{e}{e-1}$ even when bidders valuations are correlated.

\subsection{Related Work}

The study of Sequential Auctions in the economics literature was initiated by the
works of \cite{Weber2000} and \cite{Milgrom1982a} that analyzed first and second price sequential
auctions with unit-demand bidders in the incomplete
information model. They considered a setting where
items are  identical and players have the same value for all
the items available which is drawn independently from the same distribution.
Moreover, they limit their study to symmetric equilibria which,
in their setting, are always efficient.
Much of the literature of sequential auctions used the Milgrom and Weber model to study 
sequential auctions \cite{Ashenfelter1989,McAfee1993}.
Little is known for more complex valuation models in the incomplete information
case (see e.g. \cite{Krishna2009} for a detailed exposition on results
in sequential first price auctions), mainly due to the fact that for more complex valuation models or for asymmetric
or correlated settings computing the equilibrium analytically is hard. However, our price of anarchy
techniques allow us to expose meaningful properties for much more complex environments without the need
to explicitly calculate the equilibrium strategies.

A few papers consider the complete information case for more complex settings.
For example, \cite{Gale2001,Rodriguez2009,Bae2009} study models of multi-unit
demands but only for the case of two bidders. Most of this work depends heavily on
having just two bidders since then the subgame perfect equilibrium that survives
elimination of weakly dominated strategies is unique. In \cite{PaesLeme2012}
we initiated the study of the price of anarchy in sequential auctions with multi-unit demands
and more than two bidders. In this work we manage to
give efficiency guarantees for sequential auctions not only among more than two bidders
but also in the incomplete information case, even for asymmetric players and in some
cases even for correlated bidders.

Recent work from the Algorithmic Game Theory community studied the outcomes of
simple mechanisms for multi-item auctions.
Christodoulou, Kovacs and Schapira \cite{Christodoulou2008} and
\cite{Bhawalkar} study the case of running
simultaneous second price item auctions for combinatorial auction settings. \cite{Christodoulou2008} prove that for bidders with submodular
valuations and incomplete information, the Bayes-Nash Price of Anarchy is $2$.
\cite{Bhawalkar} study the more general case of
bidders with subadditive valuations and show that under complete information, the Price of
Anarchy of any Pure Nash Equilibrium is $2$ and under incomplete information the
Price of Anarchy of any Bayes-Nash Equilibrium is at most logarithmic in the
number of items. \cite{Hassidim2011}  and \cite{Immorlica2005} study the case of simultaneous first price auctions and show that the set of pure Nash equilibria of the game correspond to exactly the Walrasian equilibria. \cite{Hassidim2011}  also show that mixed Bayes-Nash equilibria have a price of anarchy of 2 for submodular bidders and logarithmic, in the number of items, for subadditive valuations.

The most recent related work on Matroid Auctions is that of \cite{Bikhch2010} who propose a centralized
ascending auction for selling a base of a Matroid that results in the \vcg
outcome. We study a sequential version of this matroid base auction which potentially renders
the auction more distributed. Several other recent works have focused on
mechanism design questions in the case when the feasible set of allocations
has a matroid structure \cite{Chawla2010,Hartline2009}.

\section{An Illustrative Example and Preliminaries}
\label{sec:prel}

We start our discussion with a simple illustrative example. Assume two items will be available for auction. First item 1 and then item 2. There are three buyers $a,b,c$. Two of the buyers $a$ and $b$ would like either of the two items, while bidder $c$ is only interested in the second item. Suppose that all players valuations are drawn from the uniform distribution $U(0,1)$ (and players $a$ and $b$ value either item the same). The following theorem, whose proof we defer to Appendix A describes the Sequential Equilibrium bidding strategy of the first stage auction of this game.

\begin{theorem}
\label{lm:example}
There is a sequential equilibrium of this game, where in the first auction, players $a$ and $b$ bid using the bidding function $b(v) = 1-\frac{\ln(1+v)}{v}$, and the players' beliefs about a player losing to a bid $b(v)$ is that the losing player has valuation uniformly distributed in the range $[0,v]$.
\end{theorem}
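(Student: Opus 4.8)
The plan is to verify that the proposed bidding function $b(v)=1-\frac{\ln(1+v)}{v}$, together with the stated belief system, constitutes a sequential equilibrium by checking that no player can profitably deviate at any information set, and that the beliefs are consistent as limits of perturbed strategies. I would proceed in three stages: first solve the second-stage subgame given the beliefs, then compute the expected continuation payoff a first-stage bidder obtains as a function of their bid, and finally verify the first-order condition that makes $b(v)$ a best response.

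Let me sketch the setup. I would begin by analyzing the second auction. Here bidder $c$ (fresh, value $v_c\sim U(0,1)$) competes against the loser of the first auction, whom I will call bidder $\ell$. By the belief system, if $\ell$ lost to a winning bid $b(v)$, then $c$ (and the outcome) treats $\ell$'s value as uniform on $[0,v]$. So the second-stage contest is an asymmetric first-price auction between $c\sim U(0,1)$ and $\ell$ whose value is $U(0,v)$ from $c$'s perspective, while $\ell$ knows its own value exactly. First I would write down the equilibrium of this asymmetric two-bidder first-price auction and, crucially, extract the expected \emph{payoff} to a type-$v$ player who enters the second stage as the loser. Call this continuation value $W(v)$; the key quantity I actually need is how a first-stage bid affects (i) the probability of winning item~1 outright at the first-price price paid, versus (ii) the probability of losing and collecting $W(\cdot)$ in the second stage.

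Next I would set up the first-stage optimization. Consider a type-$v$ player among $\{a,b\}$ deciding to bid as if their value were $z$, i.e.\ submitting $b(z)$ while their true value is $v$. Because the opponent in $\{a,b\}$ uses the monotone strictly increasing $b(\cdot)$ with $U(0,1)$ values, bidding $b(z)$ wins the first item with probability $z$ (the opponent's value lies below $z$), yielding surplus $v-b(z)$; with probability $1-z$ the player loses and proceeds to the second auction, but—and this is the subtle point the belief system is engineered to produce—having bid $b(z)$, the player's continuation value is governed by the deviation value $z$ feeding into the belief update $U(0,z)$ (and the resulting second-stage payoff), not by $v$. So the expected utility is
\begin{equation}
U(z;v) = z\,(v-b(z)) + (1-z)\,W(z),
\end{equation}
where $W(z)$ is the second-stage loser's payoff computed above. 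Imposing $\frac{\partial}{\partial z}U(z;v)\big|_{z=v}=0$ for all $v$ gives a differential equation that $b(\cdot)$ must satisfy; I would then verify that $b(v)=1-\frac{\ln(1+v)}{v}$ solves it, and check the boundary condition $b(0)=0$ (via $\ln(1+v)/v\to 1$) and monotonicity so that the first-order condition genuinely identifies a global optimum.

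\textbf{The main obstacle} will be the second stage: correctly solving the asymmetric first-price auction between a uniform-$[0,1]$ bidder and a uniform-$[0,v]$ bidder, and expressing the loser's continuation payoff $W(v)$ in closed form, since asymmetric first-price auctions generally do not admit clean solutions. I expect the specific structure here (one uniform support nested inside the other, with the belief truncation tied to the very bid under consideration) to make the algebra tractable and to be exactly what produces the logarithmic form of $b$. A secondary technical point is the \emph{sequential} equilibrium requirement: I would need to exhibit $\epsilon$-perturbed totally-mixed strategies whose limit both reproduces $b(\cdot)$ on the path and induces the stated off-path belief that a player losing to $b(v)$ has value $U(0,v)$; because $b$ is continuous and strictly increasing, the natural perturbation (uniform bidding with probability $\epsilon$) yields this belief in the limit, so consistency should follow, but it must be argued rather than asserted.
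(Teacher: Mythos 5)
Your overall architecture (solve the asymmetric second-stage auction, feed the loser's continuation value into a first-stage first-order condition, then verify consistency via uniform $\epsilon$-perturbations) matches the paper's proof, and your identification of the nested-uniform asymmetric auction as the tractable core is correct. But there is a genuine error in your first-stage decomposition. You write the deviating payoff as $U(z;v)=z\,(v-b(z))+(1-z)\,W(z)$ and assert that upon losing, the continuation value is ``governed by the deviation value $z$ feeding into the belief update $U(0,z)$ \dots not by $v$.'' Both halves of this are wrong. Only the winner and the winning price $p_1$ are announced, so the public belief about the loser is truncated at the \emph{winner's} value $b^{-1}(p_1)$, not at the loser's pretend value $z$: losing to a bid $b(t)$ yields belief $U(0,t)$ (this is exactly what the theorem statement says, read carefully). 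Conditional on losing with bid $b(z)$, the winner's value $t$ is uniform on $(z,1)$, so the asymmetry parameter of the second-stage auction is \emph{random}, and the correct continuation term is $\int_z^1 u^2(v,b(t))\,dt$, where $u^2(v,p_1)$ is the second-stage payoff of a loser who privately knows his true value $v$ and faces belief $U(0,b^{-1}(p_1))$. The paper computes $u^2(v_a,p_1)=\frac{v_a^2}{1+\sqrt{1-kv_a^2}}$ with $k=\frac{1}{(b^{-1}(p_1))^2}-1$, and explicitly remarks that the loser's utility differs for each possible price outcome --- i.e.\ it depends on both the true type $v$ and the announced price.

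This matters for the verification step of your plan. Differentiating the correct objective $x(v-b(x))+\int_x^1 u^2(v,b(t))\,dt$ at $x=v$ gives $v-b(v)-vb'(v)-u^2(v,b(v))=0$; since $k=1/v^2-1$ when $p_1=b(v)$, one gets $u^2(v,b(v))=\frac{v^2}{1+v}$, and the resulting ODE with $b(0)=0$ is solved precisely by $b(v)=1-\frac{\ln(1+v)}{v}$. Differentiating your $(1-z)W(z)$ form instead produces an extra term $(1-v)W'(v)$ and suppresses the $v$-dependence of the continuation, so your first-order condition is a \emph{different} differential equation, which $1-\frac{\ln(1+v)}{v}$ will not solve; the final check in your outline would fail rather than confirm the claim. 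A smaller point: your consistency argument is also incomplete at the boundary. For bids above the equilibrium range $[0,1-\ln 2]$, the uniform $\epsilon$-perturbation reveals nothing about the loser (whether he bid noisily or on-path, he lost either way), so the limiting belief is $U(0,1)$, i.e.\ as if the price announcement were $b(1)$ --- not a continuation of the $U(0,v)$ rule beyond the range.
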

Observe  that players
bid more conservatively in the first auction than in the standard first price auction without accounting for their expected utility from the second auction. In Figure \ref{fig:bid_eq} we show the equilibrium bidding function of the first auction as compared to the (myopic) equilibrium bidding function of $b(v)=v/2$ not taking into account the second round.
\begin{figure}
\centering
\subfigure{
\includegraphics[height=123pt]{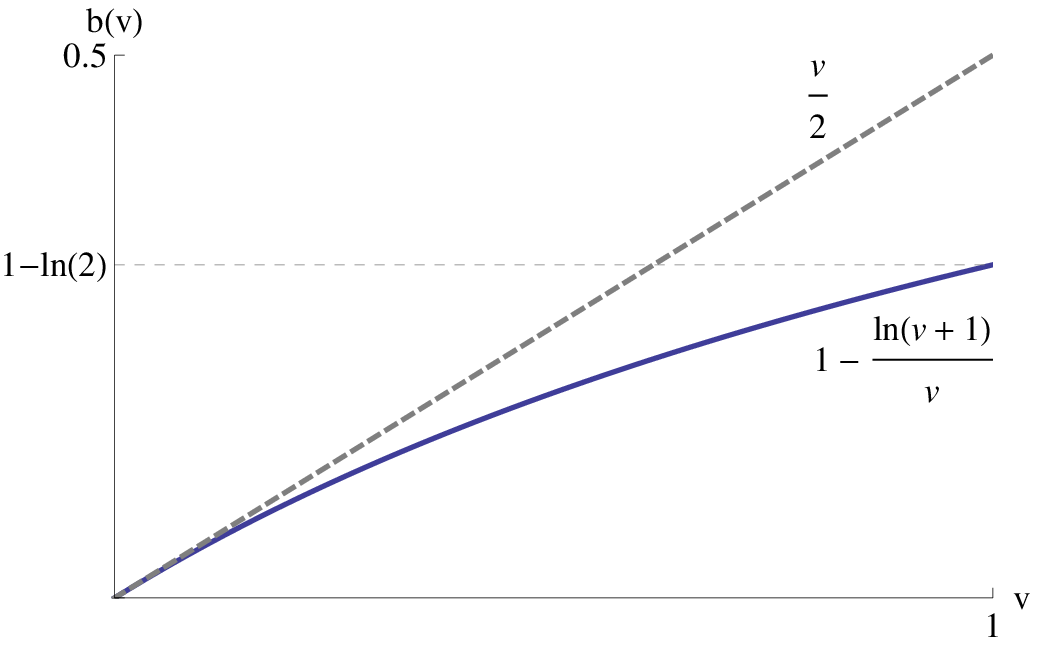}}
\subfigure{
\includegraphics[height=123pt]{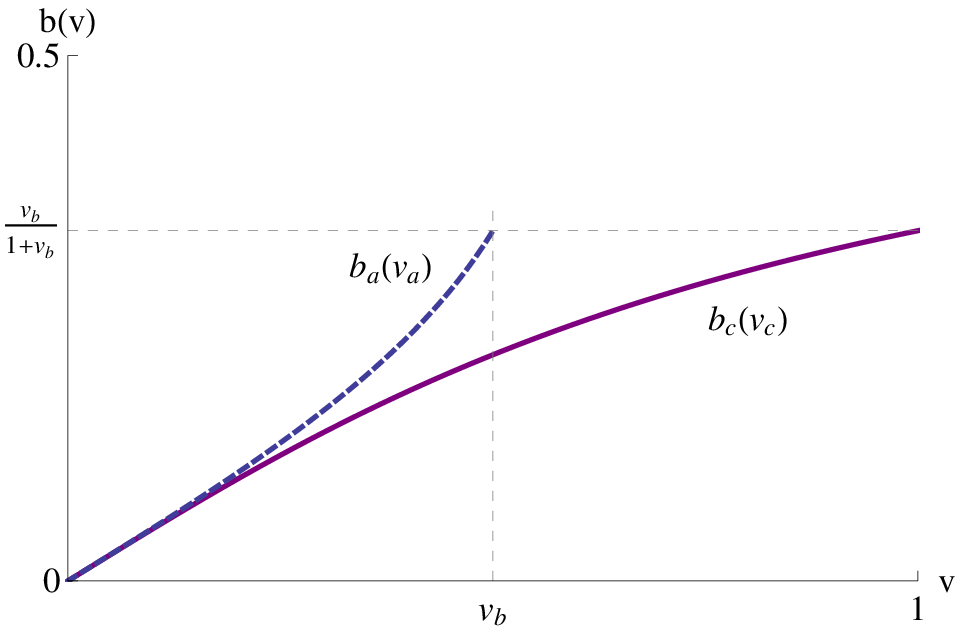}}
\caption{The left figure shows the symmetric equilibrium bidding function of the first stage auction compared to myopic bidding $b(v)=v/2$. The right figure shows the asymmetric bidding equilibrium
of the second stage auction when player $b$ wins the first auction and hence players $a,c$ compete in the second auction. The value of player $b$ is thus revealed and the belief on player $a$ is
updated to be uniform on $[0,v_b]$.}
\label{fig:bid_eq}
\end{figure}

Our main point for presenting this example is not the particular form of the equilibrium, but rather
the fact that it leads to an asymmetric auction in the second stage. Equilibria in such asymmetric auctions can be computed analytically (see e.g. \cite{Krishna2009}) and it is well known that asymmetric first price auctions lead to inefficient outcomes, i.e. there are regions of players values where the player with the smaller valuation wins. In contrast, in a full information game this simple auction has a unique equilibrium, that is
socially optimal.

\begin{corollary} \label{lm:ineff}
The above equilibrium is not guaranteed to be efficient.
\end{corollary}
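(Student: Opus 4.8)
The plan is to show that the equilibrium of Theorem \ref{lm:example} induces, conditional on the outcome of the first auction, a genuinely asymmetric first-price auction in the second stage, and then to argue that such asymmetric auctions misallocate item $2$ with positive probability. Since any misallocation of item $2$ strictly lowers social welfare below the optimum, the equilibrium fails to be efficient on a positive-measure event, which is exactly what the corollary asserts.

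First I would pin down the efficient benchmark. Because $a$ and $b$ are unit-demand with a common value for either item and $c$ wants only item $2$, item $1$ must go to one of $a,b$ while item $2$ may go to any of the three. A short case analysis shows that the optimal welfare equals $v_{(1)}+v_{(2)}$, the sum of the two largest of $v_a,v_b,v_c$, attained by giving item $1$ to the larger of $a,b$ and item $2$ to the larger among $c$ and the remaining bidder in $\{a,b\}$; the feasibility constraint ($c$ cannot take item $1$) never binds.

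Second, I would use that the first-stage bid function $b(v)=1-\frac{\ln(1+v)}{v}$ is strictly increasing on $[0,1]$, so the larger-valued of $a,b$ wins item $1$; this matches the efficient assignment of item $1$. Consequently every source of inefficiency must live in the second stage. Conditioned on a first-round winner of value $v_w$, losing implies the loser's value lies below $v_w$, so it is distributed uniformly on $[0,v_w]$ (consistent with the equilibrium belief), while the newcomer $c$ is uniform on $[0,1]$. Since $v_w<1$ almost surely, these two marginals differ, and item $2$ is sold by a truly asymmetric first-price auction.

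The decisive step, and the main obstacle, is to argue that this asymmetric auction allocates item $2$ to the lower-valued of the two competitors on a set of positive probability; note that such a misallocation strictly lowers welfare, because with the first-stage winner fixed the realized value drops from $v_{(1)}+v_{(2)}$ to a strictly smaller sum whenever the two second-stage values differ. Here I would invoke the analytic theory of asymmetric first-price auctions (e.g.\ Krishna \cite{Krishna2009}, or Maskin--Riley): equilibrium forces the two bidders to use \emph{distinct} strictly increasing bid functions, and two distinct increasing functions cannot induce a value-monotone winner rule, so an open region of value pairs exists on which the weaker bidder outbids and defeats the stronger one. The cleanest self-contained version of this step is a contradiction argument: if the allocation were efficient the two bidders would have to share a common bid function, but their distinct win-probability functions (stemming from the differing supports $[0,v_w]$ and $[0,1]$) make the corresponding first-order optimality conditions incompatible, ruling out any common bid function. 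Because this inefficiency occurs with positive probability, the equilibrium is not guaranteed to be efficient, which proves Corollary \ref{lm:ineff}.
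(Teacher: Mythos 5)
Your proof is correct, and its scaffolding (monotone first-stage bidding makes the allocation of item $1$ consistent with the optimum; the efficient benchmark is $v_{(1)}+v_{(2)}$ and the feasibility constraint never binds; hence inefficiency can only come from the second-stage asymmetric auction, and any second-stage misallocation strictly lowers welfare) matches the paper's narrative around the corollary. Where you genuinely diverge is the decisive step. The paper never invokes general theorems about asymmetric first-price auctions to prove inefficiency: in the proof of Theorem \ref{lm:example} (Appendix A) it writes down the closed-form second-stage equilibrium, $b_a(v_a)=\frac{v_a}{1+\sqrt{1-kv_a^2}}$ for the first-round loser and $b_c(v_c)=\frac{v_c}{1+\sqrt{1+kv_c^2}}$ for the newcomer, with $k=\frac{1}{(b^{-1}(p_1))^2}-1\geq 0$, and simply reads off that the strong player $c$ wins only when $v_c\geq b_c^{-1}(b_a(v_a))=\frac{v_a}{\sqrt{1-kv_a^2}}\geq v_a$, so the open region $v_a<v_c<\frac{v_a}{\sqrt{1-kv_a^2}}$ is misallocated with positive probability. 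Your replacement of this computation by an appeal to Maskin--Riley/Krishna-style results, or by the FOC-incompatibility contradiction, buys generality: it shows that \emph{any} asymmetric second stage (not just the uniform one) is inefficient, which is the broader point the paper makes informally with its citations to \cite{Maskin2011,Kirkegaard2009,Kirkegaard2011}. What it costs is that you must import regularity facts the explicit formulas give for free (existence, strict monotonicity, continuity and differentiability of the asymmetric equilibrium, no atoms or ties); your contradiction sketch does go through here because both supports are intervals and both bid functions are continuous and strictly increasing, so a single point of disagreement between the two bid functions yields a positive-measure misallocation region, but note that ``distinct bid functions'' alone does not suffice without continuity. Also, since the corollary refers specifically to ``the above equilibrium,'' whose second stage is exactly the closed-form asymmetric equilibrium already computed for Theorem \ref{lm:example}, the shortest complete argument is to quote those formulas rather than re-derive abstract necessary conditions, and you could tighten your writeup by doing so.
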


The loss of efficiency comes from a factor that is necessarily introduced by our incomplete information setting: Asymmetry.
Even if players start with symmetric priors, and even if we keep the prices secret, the fact that players participate in different auctions causes asymmetry. Suppose that players started with a symmetric prior distribution $F$. Then in the second auction the loser of the first auction has a distribution that is $F$ conditional on his value being smaller than the winner of the first auction. It is well-known that in such asymmetric settings the player with the weaker distribution is more aggressive (\cite{Maskin2011,Kirkegaard2009,Kirkegaard2011}), which leads to inefficiency.


Next we formally define an equilibrium of a sequential game. In the example above, it is natural to assume that when player $a$ loses to a bid $w$, and the bidders employ a monotone bidding strategy $b(x)$ at
equilibrium, then the common belief on player $a$'s valuation is updated to condition $a$'s distribution on $v_a\le b^{-1}(w)$, since he must have bid less than $w$ if he lost. This is the natural Bayesian update of the information about the value, given that the players observed $a$ losing the first auction. Note, however that in the equilibrium claimed above the range of bids in the first auction is $[0,1-\ln{2}]$. To complete the definition of an equilibrium, we also have to define how beliefs are updated if players bid outside the range of the equilibrium, i.e. if they go off the equilibrium path.

\subsection{Extensive-form Games of Incomplete Information}
Defining an equilibrium in a sequential (or extensive form) game is more complex than in the full information setting.  In a full information game the subgame perfect equilibrium is defined naturally via backwards induction. In each stage the player's strategies need to form an equilibrium, given the expected outcome defined in future games. In a Bayesian game we need to not only describe the strategies of the players, but also define how beliefs about player values are updated given the information available at each stage, and these strategies and belief updates need to be consistent.

A \textbf{Bayesian extensive form game} is defined in our setting as follows (see
\cite{fudenberg91} for a more comprehensive treatment).
At any point in the game, player's information correspond to pairs $(v_i,h_a)$
of the player's valuation (vector) $v_i$ and the history $h_a$ of outcomes in auctions
before auction $a$. In the games we consider,  we will assume that the same information is available to all players,
hence the information depends only on the stage of the auction, which simplifies notation.  A strategy of a player is a bid for each information set. Thus a player's strategy consists of bids $b_i(v_i,h_a)$.

We can model a sequential game via a game tree, where the starting nodes correspond to different possible valuation profiles, and later nodes correspond to such valuation and a history of play so far. However, players can only base their updates on information available to them, so a player's information set consists of nodes for each possible valuation of the rest of
the players $v_{-i}$. Player $i$ doesn't have deterministic information as to which
of these nodes is implemented, since the same history of play $h_a$ could be implemented
with different valuation profiles. However, the history of play signals incomplete information about the probability distribution of valuations. This is captured by the belief system of the sequential
equilibrium. Specifically in a sequential equilibrium a player has to have beliefs
over the nodes of an information set that are \text{consistent with the strategies}.
This forward consistency is what creates the complexity of the incomplete information
setting that we will briefly describe.
Given the beliefs of the players at an information set, his bid has to maximize
his expected utility. Moreover, the belief of a player at an information set is
the Bayesian update of his initial beliefs given the history of play.

An additional difficulty arises when we consider examples, such as the three bidder example above,
when the proposed bidding strategy has limited range. In order to fully describe an equilibrium,
we need to also define how beliefs will be updated if players deviate by bidding above the range, or more generally use strategies that are not
on an equilibrium path for some instance of the players valuations. For histories that have zero density in equilibrium, Bayesian update is not well defined. However, we need to define how  beliefs will be updated to be able to evaluate if players would benefit from deviating to such strategies.

The notion of \textbf{Perfect Bayesian Equilibrium} allows for almost any updates in this case, as long as the system of beliefs is consistent with prior and the history. The \textbf{Sequential Equilibrium}, is a more refined notion. In this equilibrium
the beliefs on the information sets that are never reached by some equilibrium path
have to be limits of Bayesian updates for $\epsilon$-perturbed strategies of the
rest of the players, i.e. if the players were playing every strategy with at least
some $\epsilon$ density, then the Bayesian update of the beliefs is well-defined
for all information sets. Then for a belief to be consistent with a sequential
equilibrium it has to be the limit of the beliefs of some sequence of such $\epsilon$
strategies.

Sequential Equilibrium has been mostly used in games with finite action and type spaces. One can think of a discretized version of our auction games, where types and bids are multiples of some $\epsilon$ and then take $\epsilon$ to zero. This is very natural in auction settings since, both types and valuations are multiples of pennies. However, we could also use a recent formal definition of Sequential Equilibrium for infinite type and action space games proposed by Myerson and Reny \cite{Myerson2011}. Our efficiency results are very robust and extend to all these different type of equilibrium definitions.

\subsection{Efficiency in Auction Games}

The core property satisfied by the above solution concepts that will be needed to prove the efficiency guarantees is that a player's expected utility conditional on his value is maximized at equilibrium. Specifically, consider the normal form representation of the extensive form game defined by our auction game. A strategy of a player $i$ in this normal form representation is a function $b_i(t_i)$ that takes a player's type $t$ as input and outputs a whole contingency plan on what a player will bid at each auction for each possible history of play.
If a strategy profile $b(t)=(b_i(t_i))_{i\in N}$ constitutes an equilibrium then it must
satisfy that for any player $i$ and type $t_i$:
\begin{equation}
u_i(t_i)=\E_{t_{-i}}[u_i(b(t)]\geq \E_{t_{-i}}[u_i(b_i',b_{-i}(t_{-i})]
\end{equation}
In other words we need only that the equilibrium is a Bayes-Nash Equilibrium (BNE) of the normal form representation of our extensive form games. The Perfect Bayesian and Sequential Equilibrium concepts introduce further refinements on how the beliefs of the players should be updated according to the history of play observed and how the players behave when some non-equilibrium history is observed. 

The social welfare $SW(b)$ of a bid profile $b$ in our auction games is the sum of the utilities of the players and the auctioneer, which boils down to the total value of the players from the resulting allocation.
Given a type profile $t$, we denote with $\opt(t)$ the allocation that maximizes social welfare.

We quantify the inefficiency of an equilibrium using the concept of the Bayes-Nash Price of Anarchy which is the fraction of the expected social welfare at the worst equilibrium concept that satisfies the above condition over the expected optimal social welfare.
\begin{equation}
\text{Bayes-PoA}=\sup_{b(\cdot)\text{ is BNE}}\frac{\E_t[\opt(t)]}{\E_{t}[SW(b(t)]}
\end{equation}

\section{Sequential Auctions for Matching Markets}
\label{sec:matching}
In this section we consider the first example of a sequential auction. There are a set of $m$ items on sale, each with a different seller. Items can go on sale simultaneously in groups, or individually. We assume that buyers are unit-demand, we use $v_{ij}$ to denote player $i$'s value for item $j$ and assume the value of player $i$ for a set $S$ of items is his maximum value item in the set: $v_i(S)=\max_{i\in S}v_{ij}$ (i.e. we assume free disposal). We will use $v_i \in \R^{[m]}$ to denote the vector of valuations of buyer $i$. We study such markets in incomplete information setting. We assume that each player's valuation vector is drawn independently from some prior distribution $\mathcal{F}_i$ that is common knowledge. Note that a single user's valuations can be arbitrarily correlated, we only assume independence between the different buyers.

\subsection{Bayes-Nash Price of Anarchy}

In this section we prove that there is little inefficiency in the matching market setting we described. The proof is based on a ``bluffing'' technique. Bluffing here corresponds to a player with value vector $v_i$ following a strategy as if he/she had a different valuation $v'_i$ till a certain point in the auction. Such bluffing is useful, as it allows the player to use the equilibrium properties to predict prices so far, and possibly allows to take items cheaply from a different branch of the computation. Specifically, suppose that a player sets his mind on getting some item $j$. Then he can try
to get that item by pretending to be some other type in all auctions previous to item $j$. This way he has the option to reach item $j$ along several
different equilibrium paths and for some of it the price on item $j$ might be small enough that it would be better for player $i$ to take it. However, equilibrium
conditions state that none of these paths are preferable and in particular taking expectation over all such possible paths is not preferable.
This bluffing is essential in the proof mainly because the bidding behavior of each player depends on the history of play and subsequently on the whole valuation vector. Thus the maximum other bid that a player faces at each auction is an implicit function of his own valuation. Such an effect didn't arise in simultaneous item auctions with independent bidders.

We first state the theorem in the simplest context of auctioning one item each step, then we show that the theorem extends to more general settings.
\begin{theorem}\label{thm:matching}
The Bayes-Nash Price of Anarchy of a Sequential First Price Auction with unit-demand bidders is at most $2\frac{e}{e-1}\approx 3.16$, that is, the expected social welfare in the Sequential First Price Auction is at least a $1/3.16$ fraction of the social value of the efficient allocation.
\end{theorem}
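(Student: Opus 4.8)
The plan is to run a smoothness-style deviation argument, adapted to the Bayesian sequential setting through a combination of resampling and the bluffing technique described above. Fix a Bayes-Nash equilibrium $b(\cdot)$, and for a type profile $t$ let $\sigma_i(t)$ denote the item that the efficient allocation $\opt(t)$ assigns to player $i$. For each player $i$ I would construct a hypothetical deviation, bound its expected utility from below, and compare it to the equilibrium utility via the Bayes-Nash condition; summing over players and relating the resulting terms to the equilibrium welfare $SW(b)$ and to $\E_t[\opt(t)]$ then yields the bound. The first step is choosing the target. Since $\opt$ depends on all players' values, player $i$ cannot know which item is ``his'', so I would let him draw a phantom opponent profile $w_{-i}\sim \mathcal{F}_{-i}$, independent of the real types, and set his target to $j=\sigma_i(v_i,w_{-i})$, computed from his true value $v_i$ and the phantom opponents. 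The point of drawing the phantom opponents from the true prior is that $(v_i,w_{-i})$ is distributed exactly as a real profile, so that $\sum_i \E[v_{i,\sigma_i(v_i,w_{-i})}]=\E_t[\opt(t)]$ using independence of types across players; a single resampled deviation, evaluated in expectation, thus reconstructs the optimum.

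The second step is reaching the target and placing the deviating bid, and this is where the bluffing technique enters. To bid for item $j$, player $i$ must survive the earlier auctions, but his own earlier bids would normally be correlated with $v_i$, and hence with $j$, contaminating the price he faces at item $j$. I would have him ignore his true value before item $j$ and instead play the equilibrium strategy of a freshly drawn phantom type $w_i'\sim\mathcal{F}_i$, so that the history up to item $j$, and in particular the highest competing bid $\beta$ he faces on $j$, is distributed as an equilibrium quantity of the profile $(w_i',v_{-i})$ and is independent of $v_i$. Averaging over the phantom type realizes the ``several equilibrium paths'' idea: none is individually profitable by the equilibrium condition, so neither is their average. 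Having reached item $j$, player $i$ then places the optimized randomized first-price bid, for which the one-shot single-item calculation shows that, against any competing bid $\beta$, the expected utility is at least $(1-1/e)\,v_{ij}-\beta$; this self-contained inequality supplies the factor $1-1/e$.

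The third step is the accounting. Combining the two previous steps, the deviation's expected utility for player $i$ is at least $(1-1/e)\,\E[v_{i,\sigma_i(v_i,w_{-i})}]$ minus the expected competing bid $\E[\beta]$ on the target. The key estimate is that the competing bid on item $j$ is at most the value of whoever wins $j$ on the bluffed equilibrium path, so the summed price terms are bounded by the expected equilibrium welfare $\E[SW(b)]$. Crucially, and this is the reason the bound is $2\frac{e}{e-1}$ rather than $\frac{e}{e-1}$, these are phantom prices along resampled paths rather than the revenue actually collected on the real equilibrium path, so they can only be charged against welfare and cannot be cancelled against the equilibrium revenue. Writing $\sum_i \E[u_i]=\E[SW(b)]-\E[\mathrm{Rev}(b)]$ and feeding the deviation bound through the Bayes-Nash inequality gives $\E[SW(b)]\ge (1-1/e)\,\E_t[\opt(t)]-\E[SW(b)]$, hence $2\,\E[SW(b)]\ge (1-1/e)\,\E_t[\opt(t)]$, i.e.\ the price of anarchy is at most $2\frac{e}{e-1}$.

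The hardest part, which I would isolate as a lemma, is making the second and third steps simultaneously rigorous. One must check that the bluffing deviation is admissible (player $i$ can simulate it knowing only his own value and the common-knowledge priors), that the faced price is genuinely decoupled from $v_i$ so the one-shot first-price inequality applies conditionally on the realized history, and, most delicately, that the expected competing bids summed over all players are bounded by a single copy of $\E[SW(b)]$ despite the mismatch between the phantom opponents $w_{-i}$ used to select the target and the real opponents $v_{-i}$ faced along the bluffed path. Controlling this no-double-counting across players, using the fact that the efficient targets form a matching, is where the genuine work lies, and it is exactly the place where correlated bidding in the sequential game makes the standard simultaneous-auction argument break down.
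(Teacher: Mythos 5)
Your skeleton is the paper's own proof: resample a phantom profile, target $j=j^*_i(v_i,w_{-i})$, bluff as a fresh phantom type up to item $j$, then play the randomized bid with density $f(t)=\frac{1}{v_{ij}-t}$ on $[0,(1-\frac{1}{e})v_{ij}]$ to extract $(1-\frac{1}{e})v_{ij}-\beta$ against any competing bid $\beta$, and re-sum over the optimal matching. But your deviation-utility bound ``at least $(1-\frac{1}{e})v_{ij}-\E[\beta]$'' is false as stated: while bluffing as type $w_i$, player $i$ must reproduce that type's equilibrium bids exactly, and may therefore \emph{win and pay for} earlier items; he cannot shade those bids to avoid winning without changing the announced outcomes and hence the other players' continuation play, which is the very decoupling your argument relies on. The paper's proof subtracts these bluff-phase payments $P^-_{ij}(w_i,v_{-i})$, bounds them by the full equilibrium payment $P_i(w_i,v_{-i})$, and after the change of variables this contributes a second $\rev_\spe$ term to the deviation bound. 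The numerical constant survives this repair only because the extra term cancels against the revenue hidden in $U_\spe=SW_\spe-\rev_\spe$; as written, your chain leading to $\E[SW(b)]\ge(1-\frac{1}{e})\E_t[\opt(t)]-\E[SW(b)]$ omits a genuine cost of the deviation.

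Your handling of the phantom price term is also off in two ways. First, the pointwise estimate ``the competing bid on item $j$ is at most the value of whoever wins $j$'' presumes no overbidding, which a Bayes--Nash equilibrium does not guarantee. Second, your diagnosis that the phantom prices ``cannot be cancelled against the equilibrium revenue'' is exactly backwards: the entire point of drawing $w$ from the true prior, together with the change of variables
\begin{equation*}
\E_{v_{-i}}\E_w\left[p^{-i}_{j^*_i(v_i,w_{-i})}(w_i,v_{-i})\right]=\E_{v_{-i}}\E_{w}\left[p^{-i}_{j^*_i(v)}(w)\right]
\end{equation*}
(valid by independence across bidders), is that after summing over players the targets $j^*_i(v)$ form a matching, so the price term becomes at most $\E_w[\sum_j p_j(w)]=\rev_\spe$, i.e.\ it \emph{is} the expected equilibrium revenue. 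The factor $2$ then enters through the final comparison $\rev_\spe\le SW_\spe$, which holds in expectation by individual rationality at equilibrium ($\sum_i\E[u_i]\ge 0$), not through an inability to identify the resampled prices with revenue. Relatedly, the ``no-double-counting lemma'' you isolate as the hardest remaining step is precisely this change of variables plus the re-indexing $\sum_i p_{j^*_i(v)}(w)\le\sum_j p_j(w)$; it is the crux, but it is settled, not open, once the bluff de-correlates the target index from the history of play.
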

\begin{proof}
Consider an instantiation of players values $v$. Let $\opt_v$ be an efficient matching allocation
for the instance $v$ and $j^*_i(v)$ be the item that player $i$ is matched to. So the allocation is maximizing $\sum_i v_{ij^*_i(v)}$. For simplicity of presentation, we will assume a pure equilibrium, that is, assume that play is a deterministic function of the type of the player and the history so far.  The result naturally extends to mixed equilibria as well.

Also let $j_i(v)$
be the maximum value item that player $i$ gets at the Perfect Bayesian Equilibrium $\spe$. We will also denote with $U_\spe$, $SW_\spe$ and
$\rev_\spe$ the expected player utility, social welfare and revenue respectively at $\spe$. In addition let
$SW_\opt$ denote the expected optimal social welfare.

To bound the price of anarchy, we will consider deviations for each player $i$. If they have a deviation that creates high utility for them, they must have this much utility also in the equilibrium outcome. In the full information setting \cite{PaesLeme2012} use a deviation where buyer $i$ attempts to get $j^*_i(v)$. This is not an option in the Bayesian setting, since $j^*_i(v)$ depends on the valuation of all the players, and hence it is not known to $i$. We consider the following randomized version $b_i'$ of the deviation for player $i$:
he draws a random sample of a valuation profile
$w$ (including his own type), identifies $j^*_i(v_i,w_{-i})$, and the goal of the deviation will be to get this item cheaply. To do this, the player $i$ will ``bluff'' and play as in the $\spe$ had he been of type $w_i$ up until item $j=j^*_i(v_i,w_{-i})$.
Then he submits a randomized bid $t$ that follows the density function $f(t)=\frac{1}{v_{ij}-t}$ with support
$[0,(1-\frac{1}{e})v_{ij}]$. 

Let $p^{-i}_j(v)=\max_{k\neq i}b_{kj}(v_k,h_j(v))$, where $h_j(v)$
is the history of play at $\spe$ up until item $j$ when players have type $v$. Denote with $b_i'(w,t)$ the instantiation
of the above mixed deviation for some random sample $w$ and a random bid sample $t$.

If $p^{-i}_j(w_i,v_{-i})<t$ then player $i$'s utility from the deviation is at least $v_{ij}-t-P_{ij}^-(w_i,v_{-i})$.
Therefore for a fixed random sample $w$ player $i$'s expected utility is at least 
$$\int_{p^{-i}_j(w_i,v_{-i})}^{(1-\frac{1}{e})v_{ij}}(v_{ij}-t)f(t)dt - P_{ij}^-(w_i,v_{-i})
=(1-\frac{1}{e})v_{ij}-p^{-i}_j(w_i,v_{-i})-P_{ij}^-(w_i,v_{-i})$$ 

Observe in this formula that we used $p^{-i}_{j}(w_i,v_{-i})$ instead of $p^{-i}_{j}(v_i,v_{-i})$. This is because player $i$ pretended to be of type $w_i$ and hence the history that
the rest of the players see up until item $j$ is as if player $i$ had type $w_{i}$.
Moreover, for an arbitrary type profile $v$, $p_j^{-i}(v)$ depends on $v_i$ only 
through the history up until item $j$.

From the above reasoning we get:
\begin{align*}
\E_t[u_i(b_i'(w,t),b_{-i}(v_{-i}))~\geq~ & (1-\frac{1}{e})v_{ij}-p^{-i}_j(w_i,v_{-i})-P_{ij}^-(w_i,v_{-i})\\
~\geq~& (1-\frac{1}{e})v_{ij}-p^{-i}_j(w_i,v_{-i})-P_i(w_i,v_{-i})
\end{align*}
where $P_i(v)$ is the total price that player $i$ pays at $\spe$ when the valuation profile is $v$.

Now taking expectation over all possible random samples, we can lower bound player $i$'s expected
utility from deviating to $b_i'$:
\begin{align*}
u_i(b_i',b_{-i}(v_{-i})) ~=~& \E_w\E_t[u_i(b_i'(w,t),b_{-i}(v_{-i}))]\\
 ~\geq~&
\E_w\left[(1-\frac{1}{e})v_{ij^*_i(v_i,w_{-i})}-p^{-i}_{j^*_i(v_i,w_{-i})}(w_i,v_{-i})-P_i(w_i,v_{-i})\right]
\end{align*}
By the equilibrium condition we have:
\begin{align*}
u_i(v_i)~\geq~& \E_{v_{-i}}[u_i(b_i',b_{-i}(v_{-i}))]\\
~\geq~& \E_{v_{-i}}\E_w\left[(1-\frac{1}{e})v_{ij^*_i(v_i,w_{-i})}-p^{-i}_{j^*_i(v_i,w_{-i})}(w_i,v_{-i})-P_i(w_i,v_{-i})\right]
\end{align*}
We work separately for each part on the right hand side:
\begin{align*}
\E_{v_{-i}}\E_w\left[v_{ij^*_i(v_i,w_{-i})}\right] = \E_{v_{-i}}\E_{w_i} \E_{w_{-i}}\left[v_{ij^*_i(v_i,w_{-i})}\right]
=\E_{w_{-i}}\left[v_{ij^*_i(v_i,w_{-i})}\right] = \E_{v_{-i}}\left[v_{ij^*_i(v)}\right]
\end{align*}
where  we used that  $w_{-i}$ is independent of $w_i$ for the $\E_{w_i}$ to be dropped, and the last equation on the first line is just a change of variables.
\begin{align*}
\E_{v_{-i}}\E_w\left[P_i(w_i,v_{-i})\right]=~&\E_{v_{-i}}\E_{w_i}\E_{w_{-i}}\left[P_i(w_i,v_{-i})\right]
=\E_{v_{-i}}\E_{w_i}\left[P_i(w_i,v_{-i})\right]=\E_{v}\left[P_i(v)\right]
\end{align*}

For the last term we will need a notation for the price of an item at some valuation profile: $p_j(v)=\max_{k} b_{kj}(v_k,h_j(v))$.
Obviously for any instance $v$ and for any player $i$: $p_j^{-i}(v)=\max_{k\neq i}b_{kj}(v_k,h_j(v))\leq \max_{k} b_{kj}(v_k,h_j(v))= p_j(v)$.

\begin{align*}
\E_{v_{-i}}\E_w\left[p_{j^*_i(v_i,w_{-i})}^{-i}(w_i,v_{-i})\right]=~&\E_{v_{-i}}\E_{w_i}\E_{w_{-i}}\left[p_{j^*_i(v_i,w_{-i})}^{-i}(w_i,v_{-i})\right]\\
=~&\E_{w_{-i}}\E_{w_i}\E_{v_{-i}}\left[p_{j^*_i(v_i,v_{-i})}^{-i}(w_i,w_{-i})\right]\\
=~&\E_{v_{-i}}\E_{w}\left[p_{j^*_i(v)}^{-i}(w)\right]
\leq~ \E_{v_{-i}}\E_{w}\left[p_{j^*_i(v)}(w)\right]
\end{align*}
Where the second line is just a convenient change of variable names. In the third equation we used independence so as to be able to consider $(w_i,v_{-i})$ as a random sample $w$ of a valuation vector.
We point out that the bluffing technique is essential in this latter computation since otherwise we would end up with a term of the form:
$\E_{v_{-i}}\E_{w_{-i}}\left[p_{j^*_i(v_i,w_{-i})}^{-i}(v_i,v_{-i})\right]$. This would lead to a term
$\E_{v_{-i}}\E_{w}\left[p_{j^*_i(v_i,w_{-i})}(v)\right]$ which when taking expectation over $v_i$ and summing among all players
would not correspond to the expected revenue at equilibrium. The fact that the players valuation affects both the history
of play and the index of the item that he is aiming at creates a problem. The bluffing technique un-correlates this dependence
since now the players value doesn't affect the history of play, but rather a random sample of his value does.

Combining the three latter equations we proved that:
\begin{equation}
u_i(v_i)\geq \E_{v_{-i}}\left[(1-\frac{1}{e})v_{ij^*_i(v)}\right]-\E_{v_{-i}}\E_{w}\left[p_{j^*_i(v)}(w)\right]-\E_{v}\left[P_i(v)\right]
\end{equation}

Taking expectation over $v_i$ for each player and adding for all players $i$ we get:
\begin{align*}
U_\spe = \sum_i u_i \geq &\sum_i \E_v\left[(1-\frac{1}{e})v_{ij^*_i(v)}\right]-\E_{v}\E_{w}\left[p_{j^*_i(v)}(w)\right]-\E_{v}\left[P_i(v)\right]\\
=& (1-\frac{1}{e})SW_\opt - \E_v\E_w\left[ \sum_{j \in [m]} p_j(w) \right]- \rev_\spe\\
=& \frac{SW_\opt}{2} - \rev_\spe - \rev_\spe
\end{align*}
where the second line follows as the sum over elements $p_{j^*_i(v)}(w)$ is the sum over all elements $p_j(v)(w)$ as the optimal assignment only effects the order of summation.
By rearranging, we get:
\begin{align*}
SW_\spe \geq (1-\frac{1}{e})SW_\opt-\rev_\spe \geq  (1-\frac{1}{e})SW_\opt-SW_\spe\implies SW_\spe \geq \frac{1}{2}(1-\frac{1}{e})SW_\opt
\end{align*}
\end{proof}

It is easy to generalize the latter proof to the case of mixed Bayes-Nash Equilibria, games where groups of items are sold simultaneously. Note that the bound of $3.16$ of this result is an improvement over the bound of $4$ for mixed Nash Equilibria of \cite{PaesLeme2012} even in the full information setting.

\begin{corollary}
The mixed Bayes-Nash Price of Anarchy of a Sequential First Price Auction where at each stage a set of items are on sale simultaneously and
bidders are unit-demand is at most $3.16$.
\end{corollary}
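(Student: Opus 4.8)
The plan is to show that nothing in the proof of Theorem~\ref{thm:matching} actually used either the purity of the equilibrium or the fact that items arrive one per stage; the argument rests on only three ingredients, namely the Bayes-Nash equilibrium condition, the independence of the valuation vectors across players, and the fact that the efficient allocation $\opt_v$ is a matching, so that distinct players are assigned distinct target items. I would therefore re-run the same computation and verify at each step why it survives the two generalizations, rather than reproving it from scratch.

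For mixed equilibria, I would first allow each player's equilibrium strategy to be a randomized contingency plan, so that the item prices $p_j(v)$, $p_j^{-i}(v)$ and the payment $P_i(v)$ become random variables even after conditioning on the type profile $v$. I would define the deviation $b_i'$ exactly as before (draw a sample $w$, bluff as type $w_i$ until the target item, then submit the randomized bid $t$), and invoke the Bayes-Nash condition, which now guarantees $u_i(v_i)\geq \E_{v_{-i}}[u_i(b_i',b_{-i}(v_{-i}))]$ with the right-hand expectation also averaging over the equilibrium randomness. Every intermediate identity in the proof---the change of variables, the dropping of the $\E_{w_i}$ factor by independence, and the bound $p_j^{-i}(w)\leq p_j(w)$---is linear in the relevant quantities and hence passes through the extra expectation over the bid randomness unchanged, yielding the same inequality $U_\spe \geq \frac{1}{2}SW_\opt - 2\rev_\spe$.

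For stages in which a whole group of items is auctioned simultaneously, I would redefine the bluffing deviation so that player $i$ plays as type $w_i$ through all stages strictly preceding the stage that contains his target item $j=j^*_i(v_i,w_{-i})$, and then, within that stage, submits the randomized bid $t$ on item $j$ alone (a legitimate unit-demand action). Because the bids inside a single stage are placed simultaneously, player $i$'s deviating bid on $j$ does not alter the other players' bids in that stage, so the competing price he faces is still exactly $p_j^{-i}(w_i,v_{-i})$ and the integral computing his expected utility from the randomized bid is word-for-word the same as in the theorem. The only remaining point is the counting step: the sum of $p_{j^*_i(v)}(w)$ over all players $i$ is bounded by $\sum_{j\in[m]}p_j(w)$, which continues to hold because the efficient allocation is a matching and therefore assigns a distinct target item to each player, regardless of how the items are grouped into stages.

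The step I expect to require the most care is the simultaneous-stage version of the bluffing deviation, specifically verifying that targeting a single item within a group does not perturb the other bids in that stage and that the price on the target item is captured by $p_j^{-i}$ exactly as in the sequential case. Once this is established the algebra is identical, and the conclusion $SW_\spe \geq \frac{1}{2}(1-\frac{1}{e})SW_\opt$ follows verbatim, giving the claimed bound of $3.16$.
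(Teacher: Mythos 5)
Your proposal is correct and matches the paper's intent exactly: the paper offers no separate proof of this corollary, merely asserting that the proof of Theorem~\ref{thm:matching} generalizes easily, and your argument carries out precisely that generalization --- linearity of all the expectation identities over the equilibrium's bid randomness for the mixed case, and the observation that simultaneity within a stage leaves the competing price $p_j^{-i}(w_i,v_{-i})$ on the target item unperturbed for the grouped case, with the matching-based counting step $\sum_i p_{j^*_i(v)}(w)\leq\sum_{j\in[m]}p_j(w)$ unaffected by how items are grouped. You correctly identified the one step needing care (the within-stage deviation), and your handling of it is sound.
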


Further, we can generalize the above theorems
to approximately unit-demand environments in the following sense: given any instance of values $v$,
the optimal matching allocation $\opt_M$ is at least $1/\gamma$ the optimal allocation $\opt$.
Then the latter two theorems carry over with a blow-up of $\gamma$.

\section{Matroid Auctions}
\label{sec:matroid}
In this section we consider a setting when $N$ players each would like to get a service. However, due the constraints of the service only a subset of them can be selected. We will assume that the constraints have a matroid structure.
Suppose there is a set of $N$ bidders and there is a matroid $\M=(N,\I)$, where the independent sets are sets that can be simultaneously served. Suppose each player $i$ has a value $v_i$ for getting the service. We assume that the valuation profile $v$ of
the bidders is drawn from a possibly correlated distribution $F$ that is common knowledge. The goal of the mechanism is to find an independent set $I$ of high value $\sum_{i \in I} v_i$ to serve.

The simple auction format will be based on the natural greedy algorithm for finding a large value independent set. A maximum value independent set is naturally a maximal independent set, a basis $\B$ of the matroid. A cut of a matroid is a set $S\subset N$ that intersects all bases. For example, in the matroid of a graph, where an edge-set is independent if it contain no cycles, basis are the spanning trees, and cuts are the set of edges crossing a graph cut.

We consider the following class of simple sequential first-price auctions: At each time step
the auctioneer picks a cut of the matroid that doesn't intersect previous winners and
runs a first price sealed-bid auction among the bidders in the cut. At the end of each auction $j$
the winner $w_j$ and the price that he paid $p_j$ are announced.
Such an auction defines an
extensive form game of incomplete information. If in each auctions bidders would announce their true value $v_i$ as bids, the auction corresponds to the standard greedy algorithm for matroids, and would result in picking the basis of maximum value. For many matroid structures cuts can be small, and hence the auction lends itself to a more distributed implementation, when only elements of a cut have to participate in one auction. The sequential structure of the auction allows participants to infer information about other players, and leads to strategies that possibly will not lead to an efficient outcome.

\subsection{Inefficiency of the Incomplete Case}
In \cite{PaesLeme2012} we studied the complete information version of this matroid auction game
and showed that it implements the \vcg outcome in the sense that all equilibria select an efficient outcome,
and make all winners pay their \vcg price.

However, when we move to the incomplete
information setting the auction no longer implements the \vcg outcome, and
inefficiency could arise at the sequential equilibria of the game.

As an example, consider the matroid on three players, where any two are independent, but not all three.
This is the graphical matroid of a triangle graph.  Suppose that all players valuations are drawn from the uniform
distribution $U(0,1)$. If we use the cut $\{a,b\}$ as the first cut, then in the next auction, the loser of the first auction is bidding against the 3rd player $c$. This is exactly the example of Lemma \ref{lm:ineff} cast as a matroid auction: players $a$ and $b$ participate in the first auction, and then $c$ plays against the loser. As we showed the result is not guaranteed to be efficient.


\subsection{Bayes-Nash Price of Anarchy}
We first present a lemma from \cite{PaesLeme2012} that will be very useful in our
main theorem. We define the notion of the
\textbf{participation graph} $\mathcal{P}(B)$ of a base $B$ to be a bipartite
graph between the elements in the base and the auctions that took place. An edge
exists between an element of the base and an auction if that element
participated in the auction.

\begin{lemma}[\cite{PaesLeme2012}] For any base $B$ of the matroid $\mathcal{M}$, $\mathcal{P}(B)$
contains a perfect matching.
\label{lem:exists_perf_match}
\end{lemma}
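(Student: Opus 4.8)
The plan is to prove the statement by induction on the rank $r$ of $\mathcal{M}$, peeling off the first auction and recursing into a contraction. First I would observe that the bipartite graph $\mathcal{P}(B)$ is balanced: the base $B$ has exactly $r$ elements, and the greedy cut-auction runs for exactly $r$ rounds, since each round produces one winner and the winners form a base $W=\{w_1,\dots,w_r\}$. Consequently, asking for a perfect matching is the same as asking for a system of distinct representatives that assigns to each $b\in B$ a distinct auction in which $b$ participates, i.e.\ an auction whose cut contains $b$; this is where I would invoke Hall's condition as the conceptual framework, but I expect the direct inductive construction to be cleaner than a counting verification of Hall.

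For the inductive step, let $C_1$ be the cut of the first auction and $w_1\in C_1$ its winner. If $w_1\in B$, I match $w_1$ to auction $1$ and recurse on $\mathcal{M}/w_1$ with the base $B\setminus\{w_1\}$ and the remaining auctions $2,\dots,r$, whose cuts are exactly cuts of the successive contractions and now avoid $w_1$. If $w_1\notin B$, I consider the fundamental circuit $C(w_1,B)$ of $w_1$ with respect to $B$, and the key move is to locate an element $b_1\in C_1\cap\bigl(C(w_1,B)\setminus\{w_1\}\bigr)$. Such a $b_1$ lies in $B$, participates in auction $1$ (it is in the cut $C_1$), and, being on the fundamental circuit, satisfies that $B'=(B\setminus\{b_1\})\cup\{w_1\}$ is again a base; hence $B\setminus\{b_1\}=B'\setminus\{w_1\}$ is a base of $\mathcal{M}/w_1$ of rank $r-1$. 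Matching $b_1$ to auction $1$ and splicing in the matching produced by the recursion on $\mathcal{M}/w_1$ with auctions $2,\dots,r$ yields the desired perfect matching. The bookkeeping here is routine: ``$b$ participates in auction $j$'' is the same relation $b\in C_j$ before and after contraction, and the winners $w_2,\dots,w_r$ continue to form a base of each contracted matroid, so the greedy structure is preserved through the recursion.

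The heart of the argument, and the step I expect to be the main obstacle, is the existence of $b_1$ when $w_1\notin B$: I need the cut $C_1$ to meet the fundamental circuit $C(w_1,B)$ in a second element besides $w_1$. This is precisely matroid orthogonality between a cut and a circuit, namely that a (minimal) cut and a circuit cannot intersect in exactly one element. Since $w_1\in C_1\cap C(w_1,B)$, orthogonality forces $\abs{C_1\cap C(w_1,B)}\ge 2$, producing the required $b_1$. I would therefore either invoke this orthogonality directly or reduce to it by passing to a cocircuit of the contraction $\mathcal{M}/\{w_1,\dots,w_{j-1}\}$ that is responsible for auction $j$; everything else in the proof is structural matroid bookkeeping, so the whole statement reduces to this single exchange fact.
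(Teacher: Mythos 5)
First, a point of reference: this paper does not actually prove the lemma --- it is imported from \cite{PaesLeme2012} without proof --- so your argument has to be judged on its own terms. The inductive skeleton is sound: a cut of $\mathcal{M}$ avoiding $w_1$ does remain a cut of $\mathcal{M}/w_1$, the fundamental-circuit exchange does make $B\setminus\{b_1\}$ a base of $\mathcal{M}/w_1$, participation is unchanged under contraction, and you correctly isolated the one nontrivial step: finding $b_1\in C_1\cap\bigl(C(w_1,B)\setminus\{w_1\}\bigr)$. But your parenthetical ``(minimal)'' is load-bearing, and your proposed fallback for non-minimal cuts cannot be made to work, because under this paper's literal definition of a cut (any set meeting all bases) the lemma itself is false. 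Take $\mathcal{M}=U_{2,3}$ on $\{a,b,c\}$ plus a coloop $d$ (bases: $\{a,b,d\},\{a,c,d\},\{b,c,d\}$), and run $C_1=\{a,d\}$ with winner $a$, then $C_2=\{d\}$ with winner $d$, then $C_3=\{b,c\}$ with winner $b$. Each $C_j$ is a cut avoiding previous winners, and the winners $\{a,b,d\}$ even form a base; yet for the base $B=\{b,c,d\}$ both $b$ and $c$ participate only in auction $3$, so $\mathcal{P}(B)$ has no perfect matching. In this run your $b_1$ fails exactly as the orthogonality heuristic predicts: $C_1\cap C(a,B)=\{a,d\}\cap\{a,b,c\}=\{a\}$, and the winner $a$ lies in no cocircuit contained in $C_1$ (the only one is $\{d\}$), so ``passing to a cocircuit responsible for the auction'' is not available. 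Since the statement fails outright for non-minimal cuts, no reduction can exist; the lemma needs the hypothesis --- implicit in the auction of \cite{PaesLeme2012} --- that each round's cut is a cocircuit, i.e.\ a minimal cut of the contraction by the previous winners. Under that hypothesis your orthogonality step is valid and the induction goes through.

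The second gap is your opening assertion that the auction lasts exactly $r$ rounds with the winners forming a base. This is not automatic and is again precisely where minimality enters: if $C_j$ is a cocircuit disjoint from $\{w_1,\dots,w_{j-1}\}$, its complement is a hyperplane, hence a flat containing the previous winners and therefore their span, so $w_j\notin\mathrm{span}(w_1,\dots,w_{j-1})$ and the winners stay independent; with arbitrary cuts the winners can become dependent mid-run and the number of auctions can exceed $r$, at which point ``perfect matching'' of a balanced bipartite graph is not even the right object (one would have to settle for a $B$-saturating matching). Add that one-line hyperplane argument, state the cocircuit hypothesis explicitly, and your contraction-based induction becomes a complete and self-contained proof --- structurally different from, and arguably cleaner than, verifying Hall's condition directly on $\mathcal{P}(B)$.
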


\begin{theorem}\label{thm:matroid}
The Bayes-Nash Price of anarchy of the Sequential First Price Cut Auction is at most $1+\frac{e}{e-1}\approx 2.58$ even if valuations are arbitrarily correlated.
\end{theorem}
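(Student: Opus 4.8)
The plan is to mirror the argument for Theorem~\ref{thm:matching}, replacing ``the item player $i$ is matched to'' by ``the auction player $i$ is matched to'' via Lemma~\ref{lem:exists_perf_match}, and to extract a $(1-\frac1e)$ fraction of each optimal player's value through the same randomized bid. Fix a realization $v$ and let $\opt_v$ be the maximum-weight basis. By Lemma~\ref{lem:exists_perf_match} the participation graph $\mathcal{P}(\opt_v)$ contains a perfect matching $\sigma_v$, so each $i\in\opt_v$ is assigned a distinct auction $a_i=\sigma_v(i)$ in which $i$ was an active, non-winning bidder along the equilibrium path. Since every auction produces exactly one winner of the realized equilibrium basis, $\sigma_v$ is a bijection between $\opt_v$ and the set of auctions that actually took place; this one-to-one correspondence between targeted and realized auctions is the structural feature I expect to buy the improved constant over the matching case.

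For each player $i$ I would construct the following randomized deviation. As in Theorem~\ref{thm:matching}, $i$ needs a target auction but does not know $\opt_v$, so $i$ draws a random profile to guess it, follows the equilibrium up to the targeted auction, and there submits the randomized bid $t$ with density $f(t)=\frac{1}{v_i-t}$ on $[0,(1-\frac1e)v_i]$. The role of this density is unchanged: whenever the competing bid in the target auction is $p$, the expected payoff from the bid is at least $(1-\frac1e)v_i-p$, so $i$'s deviation utility is at least $(1-\frac1e)v_i$ minus the price faced in the target auction minus any payment for auctions won before reaching the target. Summing over $i\in\opt_v$ and invoking the equilibrium condition $u_i(v_i)\ge\E[u_i(b_i',b_{-i})]$, I expect the value terms to reassemble $(1-\frac1e)SW_\opt$, the payment terms to reassemble $\rev_\spe$, and the target-auction prices to reassemble a quantity controlled by $\rev_\spe\le SW_\spe$. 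Feeding this into $SW_\spe=U_\spe+\rev_\spe$ and then tightening the price-to-welfare charge using the cut structure (each target price is exactly a realized winning price, since $i$ is a non-winner there, hence at most the corresponding winner's value) is what I expect to sharpen the matching-market constant $2\frac{e}{e-1}$ down to $1+\frac{e}{e-1}$.

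The main obstacle is correlation. In Theorem~\ref{thm:matching} the change of variables that turns the sampled target and sampled history back into a genuine draw from $F$ relied on independence across players: bluffing as a resampled own type $w_i$ while keeping the true opponents $v_{-i}$ produces the pair $(w_i,v_{-i})$, whose law equals $F$ only under independence. Under arbitrary correlation this fails, since reproducing a true $F$-sample would require drawing $w_i$ from the conditional law given $v_{-i}$, which $i$ does not know. I would therefore avoid bluffing about one's own value altogether: player $i$ keeps the true scalar value $v_i$ throughout and resamples only the opponents from the conditional distribution $F_{-i\mid v_i}$ to locate the target auction, so that the resampled profile $(v_i,w_{-i})$ is genuinely distributed as $F$ and the identities used in the summation survive correlation term by term. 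The delicate point, and the step I expect to be hardest, is that the real opponents hold $v_{-i}$ rather than $w_{-i}$, so I must argue that $i$ can still reach the intended auction and face a price the bookkeeping can control; here the scalar (single-value) structure and the assumption that only the winner and the winning price are made public after each round are what keep the deviation well defined off the sampled path. Verifying that these resampling identities go through for correlated $F$, and that the price sum is then charged against welfare with the extra saving coming from the cut structure, is the crux that both separates this bound from the matching-market bound and accounts for its validity under correlation.
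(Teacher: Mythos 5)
There is a genuine gap, and it sits exactly where you flagged it: your deviation is the matching-market one (sample a profile to locate a target auction, follow the equilibrium until it arrives, then submit the randomized bid), and you correctly observe that the resampling identities behind that bookkeeping break under correlation --- but you leave the repair as ``the crux'' rather than resolving it. The paper's proof does not repair the sampling argument; it discards it. The deviation there is oblivious: player $i$ draws a single $t$ with density $f(t)=\frac{1}{v_i-t}$ on $[0,(1-\frac1e)v_i]$ and bids $t$ in \emph{every} auction he participates in until he wins or the auctions end. No target auction is ever identified by the deviator, no opponent profile is sampled, and no bluffing occurs; the matched auction $A_v(i)$ from Lemma \ref{lem:exists_perf_match} appears only in the \emph{analysis}, as a certificate that if $t>p_{A_v(i)}$ the player wins somewhere at utility at least $v_i-t$ (single-value structure makes any win equally good). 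Because the deviation depends only on $v_i$, correlation is handled for free --- this is precisely why the theorem survives arbitrary correlation, whereas your proposal of resampling $w_{-i}$ from $F_{-i\mid v_i}$ runs into the problem you yourself identify: the realized path is driven by $v_{-i}$, so the target auction computed from $(v_i,w_{-i})$ need not even occur (in the cut auction the sequence of cuts is history-dependent, unlike the fixed item order in the matching market), and the price identities would not telescope to $\rev_\spe$.

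Two further ingredients of the paper's argument are absent from your sketch and are load-bearing. First, the transparency argument: for $i\in\opt(v)\setminus\spe(v)$, losing with the low bid $t$ leaves every announced pair $(w_j,p_j)$ unchanged, so the other players cannot detect the deviation and equilibrium prices persist --- this is where the assumption that the winner and price are announced is used, and it fails when $i\in\opt(v)\cap\spe(v)$, since a deviating equilibrium winner necessarily alters some announcement. The paper handles that case not by any deviation but by the trivial inequality $(1-\frac1e)v_i x_i(v)\ge(1-\frac1e)v_i y_i(v)$, and it is this substitution that produces the loss term $(1-\frac1e)SW_\spe$ and hence the constant $\frac{2-1/e}{1-1/e}=1+\frac{e}{e-1}$; your guess that the improvement comes from ``tightening the price-to-welfare charge using the cut structure'' misattributes it. Second, your deviation has the player follow his equilibrium strategy before the target, which can win auctions and incur payments along the way; that reintroduces the second revenue term ($P_i$ in the matching proof) and would give back the weaker constant $2\frac{e}{e-1}$. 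In the paper's deviation the player pays only upon his single win, already absorbed in $v_i-t$, so exactly one revenue term $\E_v\bigl[\sum_{i\in\spe(v)}p_i(v)\bigr]=\rev_\spe$ appears, via the bijection $A_v(\cdot)$ between $\opt(v)$ and the auctions held.
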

\begin{proof}
Given an instantiation of the value vector $v$, let $\opt(v)$ be the optimal base under
value vector $v$ and $\spe(v)$ the base produced at a perfect Bayesian equilibrium.
From Lemma \ref{lem:exists_perf_match} we have that for any instantiation
of values $v$ there is a perfect matching between the auctions that take place and the
players in $\opt(v)$. For a player $i\in \opt(v)$ let $A_v(i)$ be the index of the auction that player $i$
is matched with in the latter matching. Also let $x_i(v)=\mathbf{1}_{i\in \spe(v)}$ and $y_i(v)=\mathbf{1}_{i\in \opt(v)}$.

Consider an instance of players' values $v$ and a player $i\in \opt(v)-\spe(v)$.
Suppose that $i$ bids according to the following randomized strategy $b_i'$: he
picks a random bid $t$ with density $f(t)=\frac{1}{v_i-t}$ and support $[0,(1-\frac{1}{e})v_i]$
and then bids $t$ in all the auctions until he either gets allocated or the auctions finish. Since $i\notin \spe(v)$ every time that
he loses an auction $j$ he doesn't affect the outcome $(w_j,p_j)$ that is announced and
therefore the change is transparent to the rest of the players. Hence, they continue using
their equilibrium strategies in the subsequent auctions and the prices at subsequent auctions are
the same as before the deviation. Let $b_i'(t)$ be the deterministic deviation for some instance
of the random bid $t$.

If $p_{A_v(i)}<t$ then player $i$ will definitely win some auction in the deviation $b_i'(t)$ and
his utility will be $v_i-t$. An important point for the above to hold is that as long as a player
is not winning then, since by assumption he wasn't winning previously too, he doesn't change the
history of play since the same price and winner announcement is made after each auction. Hence, when
$A_v(i)$ arrives, either player $i$ has already won an item or the price of $A_v(i)$ is going to be the 
equilibrium price $p_{A_v(i)}$ of that item.  Thus player $i$'s expected utility from the deviation is at least:
\begin{equation}
u_i(b_i',b_{-i}(v_{-i}))\geq y_i(v)\int_{p_{A_v(i)}}^{(1-\frac{1}{e})v_i}(v_i-t)f(t)dt =
(1-\frac{1}{e})v_iy_i(v) - p_{A_v(i)}y_i(v)
\end{equation}

If $i\in \opt(v)\cap \spe(v)$, i.e. $x_i(v)=1$, then we cannot consider the latter deviation since
$i$ is winning at some auction at equilibrium and by switching he will definitely change either the winner
or the price of that auction. Therefore the deviation is not transparent to the rest of the players. However,
the following inequality, that is trivially true when $x_i(v)=1$, will be sufficient to give us the desired bound.
\begin{equation}
(1-\frac{1}{e})v_i x_i(v) \geq(1-\frac{1}{e})v_i y_i(v)
\end{equation}
Combining the last two inequalities we get that for all $i\in \opt(v)$:
\begin{equation}
u_i(b_i',b_{-i}(v_{-i}))\geq (1-\frac{1}{e})v_iy_i(v) - p_{A_v(i)}y_i(v)-(1-\frac{1}{e})v_i x_i(v)
\end{equation}

For $i\notin \opt(v)$ the above inequality trivially holds since the right hand side is non-positive and 
the utility from the deviation can only be non-negative. 

From the equilibrium condition we have:
\begin{equation}
\E_{v_{-i}|v_i}[u_i(b_i(v_i),b_{-i}(v_{-i}))] \geq \E_{v_{-i}|v_i}[(1-\frac{1}{e})v_i y_i(v)-p_{A_v(i)}(v)y_i(v)-(1-\frac{1}{e})v_i x_i(v)]
\end{equation}
Taking expectation over $v_i$, adding for all $i$ and using the fact that $A_v(i)$ is a matching between players in $\opt(v)$
and auctions that took place (and therefore winners of those auctions), we get:
\begin{align*}
\E_{v}[\sum_{i} u_i(b(v))] \geq & \E_{v}[(1-\frac{1}{e})\sum_{i\in \opt(v)}v_i - \sum_{i\in \opt(v)}p_{A_v(i)}(v) - (1-\frac{1}{e})\sum_{i\in \spe(v)} v_i]\\
=& (1-\frac{1}{e})SW_{\opt}-\E_{v}[\sum_{i\in \spe(v)}p_i(v)]-(1-\frac{1}{e})SW_{\spe}
\end{align*}
Rearranging we get:
\begin{align*}
\E_{v}[\sum_{i\in \spe(v)}u_i(b(v))+p_i(v)]+(1-\frac{1}{e})SW_{\spe} \geq(1-\frac{1}{e})SW_{\opt} \implies\\
SW_{\spe} +(1-\frac{1}{e})SW_{\spe} \geq (1-\frac{1}{e})SW_{\opt} \implies (1+\frac{e}{e-1})SW_{\spe}\geq SW_{\opt}
\end{align*}
\end{proof}

\subsection{Single Value Matching Markets}
In this part we consider a special case of our matching market setting where each player has a
single private value $v_i$ but participates in a subset of auctions $S_i$. The set of auctions where each
player participates is common knowledge but the value is private. We call such a setting a Single Value
Matching Market. For this special case of matching markets
we show that the bound of $1+\frac{e}{e-1}$ on the price of anarchy extends even when the values of the bidders
are arbitrarily correlated and not independent.

In terms of an auction setting the latter is a special case of a matroid setting where the matroid is the matchable set of nodes
on the one side of a bipartite graph: Let $G$ be a bipartite graph with two sides $A$ and $B$. This graph gives rise to a matroid on $A$, where a subset $S \subset A$ is independent if there is a matching of $G$ that matches all nodes in $A$. We can think of the nodes of $B$ as service stations and $A$ are the players who want to connect to some station. This example can be thought of both as a matroid auction, but it is also a special case of a matching auction where the graph represents which items can be assigned to each player, and all players value all items that they can get identically. It is interesting to note the difference between the sequential cut auction when applied to this matroid and the sequential item auction. The matching auction, runs auctions selling off each element $B$ individually, and greedily assigns the element auctioned to the winner. (With identical valuations winners will not bid on future items). In contrast the cut auction greedily commits to serve the winner $w$, but does not commit to which node in $B$ to assign it to. We can think of the mechanism as maintaining a tentative matching.

Therefore, if we want to claim a bound on the sequential item auction for this setting we cannot simply apply the bound of
the sequential cut auction from the previous section. However, using a very similar proof with the same type of deviation (bid $v_i/2$
at all the item-auctions you participate in until you get allocated) leads to a proof for the item auction too. Moreover,
such a proof holds for correlated bidders.

\begin{theorem}
The Bayes-Nash Price of anarchy of the Sequential First Price Item Auction in a single-value matching market is at most $1+\frac{e}{e-1}\approx 2.58$ even if the values of the different players are arbitrarily correlated.
\end{theorem}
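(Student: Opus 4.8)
The plan is to mirror the proof of Theorem~\ref{thm:matroid} almost verbatim, replacing the sequential cut auction by the sequential item auction and exploiting that in a matching market the optimal allocation is itself a matching. Fix a value profile $v$, let $\opt(v)$ be the optimal matching and, for $i\in\opt(v)$, let $j^*_i(v)$ be the item assigned to $i$; let $\spe(v)$ be the equilibrium matching, and set $x_i(v)=\mathbf{1}_{i\in\spe(v)}$ and $y_i(v)=\mathbf{1}_{i\in\opt(v)}$. Since distinct players in $\opt(v)$ are matched to distinct items, the map $i\mapsto j^*_i(v)$ already plays the role of the participation matching $A_v$ of Lemma~\ref{lem:exists_perf_match}, so no separate matching argument is needed here.

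The deviation is exactly the randomized bid used in Theorem~\ref{thm:matroid}: for $i\in\opt(v)\setminus\spe(v)$, player $i$ draws $t$ with density $f(t)=\frac{1}{v_i-t}$ on $[0,(1-\frac{1}{e})v_i]$ and bids $t$ in every auction of $S_i$ until he is allocated. The key observation is that, since $i\notin\spe(v)$, player $i$ never wins at equilibrium, so as long as he keeps losing his bid $t$ stays below the winning bid and the announced pair (winner, price) at every auction is unchanged; by induction over the auctions the entire history, and hence every other player's continuation strategy, is identical to equilibrium. Consequently, when the auction for item $j^*_i(v)$ arrives, either $i$ has already won some earlier item (utility $\ge v_i-t\ge 0$) or the competing price is exactly the equilibrium price $p_{j^*_i(v)}$ and $i$ wins iff $t>p_{j^*_i(v)}$. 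Integrating, $\E_t[u_i(b_i',b_{-i}(v_{-i}))]\ge\int_{p_{j^*_i(v)}}^{(1-\frac{1}{e})v_i}(v_i-t)f(t)\,dt=(1-\frac{1}{e})v_i-p_{j^*_i(v)}$, which I write as $(1-\frac{1}{e})v_iy_i(v)-p_{j^*_i(v)}y_i(v)$. For $i\in\opt(v)\cap\spe(v)$ I use the trivial inequality $(1-\frac{1}{e})v_ix_i(v)\ge(1-\frac{1}{e})v_iy_i(v)$, and for $i\notin\opt(v)$ the right-hand side below is non-positive; in all cases $u_i(b_i',b_{-i}(v_{-i}))\ge(1-\frac{1}{e})v_iy_i(v)-p_{j^*_i(v)}y_i(v)-(1-\frac{1}{e})v_ix_i(v)$.

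Because the deviation depends only on $v_i$, the equilibrium condition can be applied conditionally on $v_i$; this is precisely what lets the argument tolerate arbitrary correlation among the $v_i$, as in Theorem~\ref{thm:matroid} and unlike the sampled deviation of Theorem~\ref{thm:matching}. Taking $\E_{v_{-i}\mid v_i}$, then $\E_{v_i}$, and summing over $i$ gives $U_{\spe}\ge(1-\frac{1}{e})SW_{\opt}-\E_{v}[\sum_{i\in\opt(v)}p_{j^*_i(v)}]-(1-\frac{1}{e})SW_{\spe}$. The one place where the item auction differs from the cut auction appears here: the items $\{j^*_i(v)\}_{i\in\opt(v)}$ are distinct but need not coincide with the items actually sold at equilibrium, so instead of the exact identity $\sum p_{A_v(i)}=\rev_{\spe}$ one only has $\sum_{i\in\opt(v)}p_{j^*_i(v)}\le\sum_{j}p_j=\rev_{\spe}$, since prices are non-negative and the optimal items form a subset of all items. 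This inequality points in the favorable direction, so substituting it and using $U_{\spe}=SW_{\spe}-\rev_{\spe}$ yields $SW_{\spe}+(1-\frac{1}{e})SW_{\spe}\ge(1-\frac{1}{e})SW_{\opt}$, i.e. $(1+\frac{e}{e-1})SW_{\spe}\ge SW_{\opt}$.

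The step I expect to require the most care is the transparency claim: one must verify that a consistently-losing deviator leaves every announced (winner, price) pair, and therefore every belief update and every continuation bid, untouched, so that the equilibrium price $p_{j^*_i(v)}$ is genuinely the price $i$ faces at his target item-auction. This is where the first-price format and the information structure enter, and it is the only point at which the passage from cut auctions to item auctions must be re-justified; everything else is a transcription of the proof of Theorem~\ref{thm:matroid} with the equality in the revenue bookkeeping relaxed to the harmless inequality above.
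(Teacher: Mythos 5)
Your proof is correct and follows the approach the paper itself intends for this theorem, which the paper only sketches by reference to Theorem \ref{thm:matroid}: the same always-losing, history-transparent deviation, with the exact revenue identity from the participation matching relaxed to the (valid, favorable) inequality $\sum_{i\in\opt(v)}p_{j^*_i(v)}(v)\le\sum_j p_j(v)$, and with arbitrary correlation tolerated precisely because the deviation depends only on $v_i$. One remark: the paper's parenthetical sketch says to bid $v_i/2$ (a leftover from the earlier version whose bound was $3$), whereas your randomized bid with density $f(t)=\frac{1}{v_i-t}$ on $\left[0,\left(1-\frac{1}{e}\right)v_i\right]$ is the correct instantiation of ``the same type of deviation'' needed for the claimed $1+\frac{e}{e-1}\approx 2.58$ bound.
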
 

\section*{Acknowledgements}We would like to thank William Sandholm for pointing to us the notes of Myerson and Reny on sequential equilibria for infinite type and action games. We would also like to thank the reviewers for the useful comments.

\bibliographystyle{abbrv}
\bibliography{sigproc}
\section*{Appendix A: Proof of Theorem \ref{lm:example}}

In this section we present in detail the equilibrium computation of the sequential item auction example of
two items $1$ and $2$ and three bidders $a,b,c$.

Consider the first auction. Since at that point both players face a symmetric setting, it is natural to look at
equilibria where $a,b$ bid symmetrically in the first auction. Thus suppose
that at the first auction both $a$ and $b$ bid according to some strategy monotone $b(v)$. Based on such a symmetric play in the first auction, we will work out analytically the equilibrium and utilities in the second auction, and then solve the resulting differential equation that equilibrium in the first auction has to satisfy.

In the second auction, player $c$ is playing with the loser of the first auction.
Without loss of generality assume this is player $a$. From the fact that player $a$ lost the first auction
and the bidders use monotone strategies, it becomes common knowledge that $a$ has a value less than $b^{-1}(p_1)$, where
$p_1$ is the winning price of the first cut auction. Thus the auction for the second item
is a first price auction with asymmetric bidders: player $a$ is uniformly distributed
in $[0,b^{-1}(p_1)]$ and player $c$ is uniformly distributed in $[0,1]$. The latter is
a well known setting and the bidding strategies can be analytically computed (see e.g.
\cite{Krishna2009}):
\begin{align*}
b_a(v_a)= & \frac{1}{k v_a}(1-\sqrt{1-k v_a^2})=\frac{v_a}{1+\sqrt{1-k v_a^2}}\\
b_c(v_c)= & \frac{1}{k v_c}(\sqrt{1+k v_c^2}-1)=\frac{v_c}{1+\sqrt{1+k v_c^2}}
\end{align*}
where $k = \frac{1}{(b^{-1}(p_1))^2}-1\geq 0$. Observe that if the "weak" player $a$ has value $v_a$ then for the strong player to win he must have value $v_c\geq b_c^{-1}(b_a(v_a))= \frac{v_a}{\sqrt{1-kv_a^2}}\geq v_a$. Hence, 
the allocation in this auction will not be efficient.

This bidding strategies allow us to explicitly compute player's $a$ expected utility from the last auction given
his value $v_a$, and given the price the first auction was won at. (Note that in the first auction
player $a$ has different utility for each possible price outcome even when he looses.)
Let $u_a^2(v_a,p_1)$ denote player $a$'s expected utility from the last auction given his value $v_a$, and a
price announcement of $p_1$ in the first auction won by player $b$.
\begin{align*}
u_a^2(v_a,p_1) = & F_c(b_c^{-1}(b_a(v_a)))(v_a-b_a(v_a))=b_c^{-1}(b_a(v_a))(v_a-b_a(v_a))\\
=&\frac{v_a}{\sqrt{1-kv_a^2}}(v_a-b_a(v_a))=
\frac{v_a^2}{1+\sqrt{1-k v_a^2}}
\end{align*}

Since we assume a symmetric strategy in the first auction, the expected utility of
player $a$ in the first auction if he makes a bid as if he has value $x$ is:
\begin{align*}
u_a^1(x) =~& F_b(x)(v_a-b(x))+(1-F_b(x))\E_{v_b|v_b>x}[u_a^2(v_a,b(v_b))]\\=~&
x(v_a-b(x))+\int_{x}^1 u_a^2(v_a,b(t)) dt
\end{align*}
The first part on the right hand side is the utility of the player from winning the auction and the
second part is the utility of the player when he loses.

For $b(x)$ to be an equilibrium it has to be that the latter utility is maximized at
$v_a$. Thus taking the first order conditions gives us:
\begin{align*}
\frac{\partial u_a^1(x)}{\partial x}\bigg|_{x=v_a}=0 \Leftrightarrow
v_a-b(v_a)-v_ab'(v_a)-u_a^2(v_a,b(v_a))=0
\end{align*}
We also impose the boundary condition that a player with zero value will bid zero $b(0)=0$.
It may be useful to observe that without the final part of $u_a^2(v_a,b(v_a))$ the last first order differential equation
gives as a unique solution the well known first price equilibrium for uniformly distributed bidders
of $b(v_a)=v_a/2$. In our case the latter part was introduced due to the externality
introduced by the subsequent auction and the fact that the utility in the subsequent auction depends
on price at the current one. The solution of the above differential equation gives us the equilibrium with the second auction.

Now, we analytically compute the equilibrium function at the first auction. First we observe
that for $p_1=b(v_a)$ we have: $k = \frac{1}{(b^{-1}(b(v_a)))^2}-1=\frac{1}{v_a^2}-1$.
Hence:
\begin{align*}
u_a^2(v_a,b(v_a))=&\frac{v_a^2}{1+v_a}
\end{align*}
Thus the differential equation for the first auction becomes:
\begin{align*}
v_a-b(v_a)-v_ab'(v_a)-\frac{v_a^2}{1+v_a}=0
\end{align*}
This has a unique solution of $b(v) = 1-\frac{\ln(1+v)}{v}$.  So far we assumed $a$ and $b$ in the first auction
bid in the range of $[0,b^{-1}(1)]$ possibly pretending to have a different value, but never bidding outside this range (above $b^{-1}(1)]$, and that beliefs after he first auction about the loser are updated to a uniform distribution
on $[0,b^{-1}(p)]$.

To complete the equilibrium computation we need to show that this equilibrium arises as a limit of equilibria where players bit uniformly random $U(0,1)$ with a small probability $\epsilon$ and bid
according to the equilibrium $b(x)$ with the remaining. For such strategies the Bayesian update
of beliefs is well defined for any possible bid. In the equilibrium, the updated beliefs have to be limits of
such Bayesian updates as $\epsilon$ goes to $0$. For a price announcement in the range of $b(x)$ this
updates are the classic updates we defined above. For a price announcement above the range of $b(x)$
there two cases: either the loser bid noisily and lost in which case there is
no information revealed or he bid according to the equilibrium and lost in which case again
no information is leaked about his value. Thus the updated belief for any $\epsilon$ will
be $U(0,1)$ for the loser. Thus the limit will also be $U(0,1)$. So for price announcements
that are not predicted by the equilibrium function the updated beliefs are as if there was a price
announcement of $b(1)$. The latter detail completes the Sequential Equilibrium computation.

\end{document}